\documentclass[conference]{IEEEtran}
\usepackage[utf8]{inputenc}
\usepackage{graphicx}
\usepackage{amssymb}
\usepackage[cmex10]{amsmath}
\usepackage{color}
\usepackage{theorem}
\usepackage{pifont}
\usepackage{array}
\usepackage{cite}
\usepackage{url}
\usepackage{tikz}
\usepackage[english]{babel}
\usepackage{algorithm}
\usepackage{algorithmic}
\usepackage{psfrag}
\usepackage{paralist}

\def\ve#1{{\mathchoice{\mbox{\boldmath$\displaystyle #1$}}%
              {\mbox{\boldmath$\textstyle #1$}}%
              {\mbox{\boldmath$\scriptstyle #1$}}%
              {\mbox{\boldmath$\scriptscriptstyle #1$}}}}
\def\rank#1{\operatorname{rank}(#1)}

\def\Snode{\mathsf{S}}
\def\Dnode{\mathsf{D}}
\def\Nset{\mathcal{N}}
\def\Eset{\mathcal{E}}
\def\Graph{\mathcal{G}}
\def\node{\mathsf{N}}

\def\indeg#1{d_{#1}^{\mathrm{in}}}
\def\outdeg#1{d_{#1}^{\mathrm{out}}}

\def\mincut{\operatorname{mincut}}

\def\T{\mathsf{\!T}}
\def\H{\mathsf{\!H}}
\newcommand{\RM}[1]{\MakeUppercase{\romannumeral #1{}}}

\DeclareSymbolFont{AMSb}{U}{msb}{m}{n}
\DeclareMathSymbol{\C}{\mathalpha}{AMSb}{"43}
\DeclareMathSymbol{\F}{\mathalpha}{AMSb}{"46}
\DeclareMathSymbol{\R}{\mathalpha}{AMSb}{"52}
\DeclareMathSymbol{\N}{\mathalpha}{AMSb}{"4E}

\newtheorem{myTheo}{Theorem}


\hyphenation{op-tical net-works semi-conduc-tor}

\begin{document}
\title{Layering of Communication Networks\\and a Forward-Backward Duality\vspace*{-10mm}}

\IEEEoverridecommandlockouts
\author{\IEEEauthorblockN{%
Michael Cyran${}^1$, Birgit Schotsch${}^2$, 
Johannes B. Huber${}^1$, Robert F.H. Fischer${}^3$, Vahid Forutan${}^3$%
}%
\IEEEauthorblockA{%
${}^1$Lehrstuhl f{\"u}r Informations{\"u}bertragung, 
Friedrich-Alexander-Universit{\"a}t Erlangen-N{\"u}rnberg, Erlangen, Germany\\
${}^2$Airbus DS GmbH, Munich, Germany\\
${}^3$Institut f{\"u}r Nachrichtentechnik, Universit{\"a}t Ulm, Ulm, Germany\\
\fontsize{8pt}{10pt plus 0.2pt minus 0.1pt}\tt michael.cyran@fau.de, 
birgit.schotsch@airbus.com, johannes.huber@fau.de,\\robert.fischer@uni-ulm.de, 
vahid.forutan@uni-ulm.de%
\thanks{This work was supported by DFG under grants FI~982/4-3 and HU~634/11-3, 
        and by BMBF under grant 16~BP~12406.}
}
\vspace*{-3mm}
}

\maketitle

\begin{abstract}
In layered communication networks there are only connections between 
intermediate nodes in adjacent layers. Applying network coding to such networks 
provides a number of benefits in theory as well as in practice. We propose a 
\emph{layering procedure} to transform an arbitrary network into a layered 
structure. Furthermore, we derive a \emph{forward-backward duality} for linear 
network codes, which can be seen as an analogon to the \emph{uplink-downlink 
duality} in MIMO communication systems.
\end{abstract}

\vspace*{-2mm}
\section{Introduction}
\noindent In \cite{ahlswede2000network} it was shown that communication between 
two nodes within a communication network is possible up to a rate that is equal 
to the minimum rate flowing through any possible cut between these two 
nodes---the \emph{mincut} between them. This rate can be achieved by allowing 
intermediate nodes to \emph{code}, i.e., to calculate functions of their 
incoming messages before forwarding them. In \cite{li2003linear} it was proved 
that it suffices to apply \emph{linear network coding} (LNC), i.e., 
intermediate nodes just need to form \emph{linear} combinations of their 
received messages from a finite field $\F_q$. If all operations are performed 
over a finite field of large enough size $q$, the factors at the intermediate 
nodes may even be drawn independently at random, which leads to a robust, 
decentralized, and capacity achieving approach: \emph{random linear network 
coding} (RLNC) \cite{ho2006,silva2010}.

This paper studies network coding (NC) in \emph{layered networks}, where 
intermediate nodes are arranged in layers and there exist only edges between 
nodes which are located in adjacent layers. We introduce a \emph{layering}
procedure for establishing a layered structure in seemingly disparate and 
unstructured network topologies. Applying NC to a layered network provides a 
number of benefits in theory for analysis as well as in practice. Moreover, we 
address the problem of \emph{bidirectional NC} and derive a 
\emph{forward-backward duality}. 

The paper is organized as follows: Sec.~\ref{Sec:NC} gives a brief 
recapitulation and a classification of NC. In Sec.~\ref{Sec:layering} we 
examine layered networks and introduce the \emph{layering procedure}. 
\emph{Bidirectional NC} is discussed in Sec.~\ref{Sec:BidirectionalNC} and 
some conclusions are drawn in Sec.~\ref{Sec:Conclusion}.

\vspace*{-2mm}
\section{Brief Recapitulation of Network Coding} \label{Sec:NC}
\subsection{Problem Formulation}
\vspace*{-1mm}
We define a communication network as a directed, acyclic graph 
$\Graph=\{\Nset,\Eset\}$ with a set of nodes $\Nset$ and a set of edges 
$\Eset$. The considered \emph{multicast scenario} consists of a unique source 
node $\Snode \in \Nset$ with $n$ outgoing edges, and $K$ destination nodes 
$\Dnode_k$, $k=1,\dots,K$, with $N_k \ge n$ incoming edges. The source 
transmits $n$ symbols $x_1,\dots,x_n \in \F_q$ to each of the destination nodes 
$\Dnode_k$ by injecting these $n$ symbols in parallel (one on each of its 
outgoing edges) into the network and each destination node $\Dnode_k$ tries to 
reconstruct all these symbols from its $N_k$ receive symbols 
$y_{k,1},\dots,y_{k,N_k} \in \F_q$. Nodes within the network are connected by 
edges $e_{i,j} = (\node_i, \node_j) \in \Eset$.
Each edge represents a noiseless%
\footnote{Since we do not treat error-correction coding for networks in this
paper, we restrict ourselves to the case of error-free NC. However, all 
statements contained in this paper are also applicable for noisy networks.}%
communication link on which one symbol from $\F_q$ can be transmitted per 
usage. We further assume that each edge induces the same delay.%
\footnote{If this is not the case, equal-delay edges can be achieved through
          appropriate buffers at the intermediate nodes.}
The in-degree $\indeg{i}$ and the out-degree $\outdeg{i}$ of a node $\node_i$ 
is defined as the number of its incoming and outgoing edges, respectively. 
Coding at intermediate nodes is accomplished as follows: each node $\node_i$ 
collects the symbols from each of its $\indeg{i}$ incoming edges. Then, it 
computes possibly different functions of these symbols and transmits them on 
its $\outdeg{i}$ outgoing edges. 
\vspace*{-2mm}
\subsection{Classification of Network Coding Variants}
Essentially, there exist two distinct approaches to generate outgoing messages
at intermediate nodes. In the first one, which we denote as NC Variant~\RM{1},
each intermediate node calculates only a single function of its input symbols
and transmits the resulting output symbol on all outgoing edges.
%
%
This variant is applicable, e.g., in wireless networks, where intermediate
nodes possess omnidirectional antennas, and thus, transmit a single signal. In
NC Variant~\RM{2} intermediate nodes compute \emph{individual} output symbols 
for their outgoing edges. This variant can be applied, e.g., in wired networks.
 \begin{figure}
  \centering
  \psfrag{p1}[c][c][1][0]{$z_{i,1}^{\mathrm{in}}$}
  \psfrag{pr}[c][c][1][0]{$z_{i,{\indeg{i}}}^\mathrm{in}$}
  \psfrag{po1}[c][c][1][0]{$z_{i,1}^{\mathrm{out}}$}
  \psfrag{por}[c][c][1][0]{$z_{i,{\outdeg{i}}}^{\mathrm{out}}$}
  \psfrag{ddd}[c][c][1][90]{$\dots$}
  \psfrag{N}[cc][bl][.8][0]{$\node_i$}
  \psfrag{N1}[cc][bl][.8][0]{$\node_{i,1}$}
  \psfrag{N2}[cc][bl][.8][0]{$\node_{i, \outdeg{i}}$}
  \psfrag{a}[c][c][1][0]{(a)}
  \psfrag{b}[c][c][1][0]{(b)}
  \includegraphics[width=.4\textwidth]{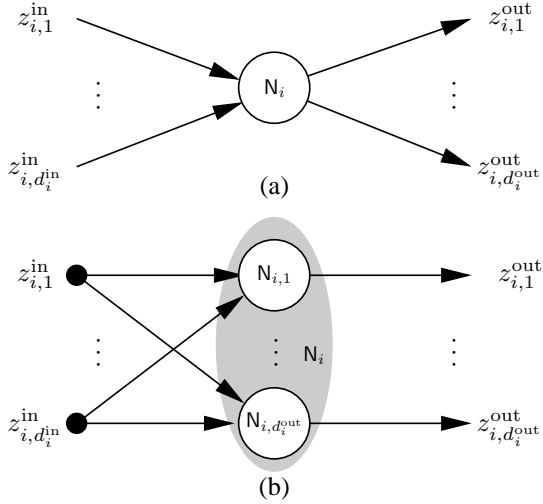}
  \caption{Illustration of Theorem~\ref{Theo:var12}: Conversion of a node 
           $\node_i$ which applies NC Variant~\RM{2} (a) into $\outdeg{i}$ 
           single output nodes $\node_{i,1},\dots,\node_{i,\outdeg{i}}$ (b).}
  \label{Fig:var1var2conv}
 \end{figure}
In Fig.~\ref{Fig:var1var2conv}(a) an intermediate node $\node_i$ with 
$\indeg{i}$ incoming and $\outdeg{i}$ outgoing edges is depicted. The incoming 
and the outgoing symbols of node $\node_i$ are denoted as 
$z_{i,\delta}^{\mathrm{in}}$, $\delta=1,\dots,\indeg{i}$, and 
$z_{i,\rho}^{\mathrm{out}}$, $\rho=1,\dots,\outdeg{i}$, respectively. The two 
NC variants are closely related to each other. This is specified in the 
following theorem and is illustrated in Fig.~\ref{Fig:var1var2conv}.
\vspace*{-2mm}
\begin{myTheo} \label{Theo:var12}
 A communication network employing NC Variant~\RM{2} can be transformed into an 
 equivalent network which applies NC Variant~\RM{1}, by splitting up each 
 intermediate node $\node_i$ with $\outdeg{i}$ outgoing edges into $\outdeg{i}$ 
 single output auxiliary nodes. These auxiliary nodes possess the same input 
 edges as the original node $\node_i$.
\end{myTheo}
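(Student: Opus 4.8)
The plan is to make the node-splitting construction explicit and then verify, by induction along a topological order, that it leaves the input–output behaviour of the network unchanged. Starting from a Variant~\RM{2} network $\Graph=\{\Nset,\Eset\}$, I would form $\Graph'=\{\Nset',\Eset'\}$ by keeping the source, all destinations, and every intermediate node of out-degree~$1$, and by replacing each intermediate node $\node_i$ of out-degree $\outdeg{i}\ge 2$ with $\outdeg{i}$ auxiliary nodes $\node_{i,1},\dots,\node_{i,\outdeg{i}}$. Each $\node_{i,\rho}$ is given a copy of every incoming edge of $\node_i$ (so that it sees the same inputs $z_{i,1}^{\mathrm{in}},\dots,z_{i,\indeg{i}}^{\mathrm{in}}$), and it is declared a Variant~\RM{1} node whose single local function is precisely the one that $\node_i$ used to generate its $\rho$-th output symbol $z_{i,\rho}^{\mathrm{out}}$; the symbol produced by $\node_{i,\rho}$ is then delivered to exactly those successors that in $\Graph$ received $z_{i,\rho}^{\mathrm{out}}$ along the $\rho$-th outgoing edge of $\node_i$.

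Next I would establish \emph{equivalence}. Process the nodes of $\Graph$ in a topological order; since $\Graph$ is a directed acyclic graph and the auxiliary nodes $\node_{i,\rho}$ can be inserted into the slot formerly occupied by $\node_i$ without creating a directed cycle, $\Graph'$ is again acyclic and the same order, refined at the split nodes, works for it. The induction hypothesis is that, for every fixed tuple of source symbols, each edge of $\Graph$ and the corresponding edge(s) of $\Graph'$ carry identical symbols from $\F_q$. For a split node $\node_i$ this is immediate: all $\outdeg{i}$ copies receive, by the hypothesis applied to the predecessors, the same incoming symbols as $\node_i$, so $\node_{i,\rho}$ outputs precisely the $z_{i,\rho}^{\mathrm{out}}$ that $\node_i$ put on its $\rho$-th edge; for an unsplit node there is nothing to check. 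Consequently every destination $\Dnode_k$ observes the same receive tuple $y_{k,1},\dots,y_{k,N_k}$ in $\Graph'$ as in $\Graph$, so decodability — and in particular the achievable multicast rate and the mincut to each $\Dnode_k$ — is preserved. (The reverse inclusion is trivial: a Variant~\RM{1} network is the special case of a Variant~\RM{2} network in which all $\outdeg{i}$ local functions of a node coincide, which is what makes "equivalent network" meaningful.)

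I expect no conceptual difficulty here: the heart of the matter is the tautology that "choosing one function per outgoing edge at a node" is the same as "placing one single-output node per outgoing edge," so most of the write-up is bookkeeping — naming the duplicated incoming edges and rerouting the successors. The one step worth stating carefully rather than asserting is the preservation of acyclicity, and hence of the applicability of the linear-network-coding framework over $\F_q$: it suffices to observe that replacing $\node_i$ by $\node_{i,1},\dots,\node_{i,\outdeg{i}}$, each having the predecessors of $\node_i$ as in-neighbours and a subset of the successors of $\node_i$ as out-neighbours, cannot introduce a cycle, since any cycle through the copies would project to a cycle through $\node_i$ in $\Graph$.
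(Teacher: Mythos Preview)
Your proposal is correct and follows the same node-splitting construction as the paper; the paper's own proof is a two-sentence sketch that simply states the splitting and asserts equivalence by reference to the figure, whereas you additionally supply the topological-order induction and the acyclicity check that the paper leaves implicit. Nothing needs to change.
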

\vspace*{-1mm}
\begin{proof}
 A Variant-\RM{2} node $\node_i$ is split up into $\outdeg{i}$ auxiliary single 
 output nodes $\node_{i,j}$, $j=1,\dots,\outdeg{i}$, cf. 
 Fig.~\ref{Fig:var1var2conv}(b).
 By repeating this procedure for all Variant-\RM{2} nodes results in an 
 equivalent NC Variant~\RM{1} network.
\end{proof}

Hybrid forms of these two variants are also possible, if a node $\node_i$ 
transmits $h < \outdeg{i}$ distinct messages. 
Such a variant is possible, e.g., in wireless networks, where intermediate 
nodes possess several directional antennas and transmit distinct messages in 
distinct directions. These hybrid variants can also be transformed into NC 
Variant~\RM{1} by splitting up nodes which transmit $h$ different messages 
into $h$ auxiliary nodes.

Obviously, the mincut of a network can only be achieved by applying NC 
Variant~\RM{2}. However, for analysis the equivalent NC Variant~\RM{1} 
representation is more convenient, as will be shown in the remainder of this 
paper. 
\vspace*{-2mm}
\subsection{Linear Network Coding}
In LNC the outgoing messages $z_{i,\rho}^{\mathrm{out}}$ at a node $\node_i$ 
are $\F_q$-linear combinations of their incoming messages 
$z_{i,\delta}^{\mathrm{in}}$
\begin{equation} \label{Eq:linNC}
 z_{i,\rho}^{\mathrm{out}} = \sum_{\delta=1}^{\indeg{i}} c_{i,\delta,\rho}^{}
                         \cdot z_{i,\delta}^{\mathrm{in}} \, ,
 \quad \rho = 1,\dots,\outdeg{i} \, ,
\end{equation}
where $c_{i,\delta,\rho} \in \F_q$ are the \emph{linear coding coefficients} at 
node $\node_i$. If NC Variant~\RM{1} is applied, all outgoing symbols are 
equal, i.e., 
$z_{i}^{\mathrm{out}}=z_{i,1}^{\mathrm{out}}=\ldots=
z_{i,\outdeg{i}}^{\mathrm{out}}$, and thus, $c_{i,\delta,\rho}=c_{i,\delta}$, 
$\forall \rho$, whereas in NC Variant~\RM{2} these quantities are different.

Let $\ve{z}_i^{\mathrm{in}} \in \F_q^{\indeg{i}}$ and 
$\ve{z}_i^{\mathrm{out}} \in \F_q^{\outdeg{i}}$ be the vectors of incoming and
outgoing symbols at node $\node_i$, respectively. We can write (\ref{Eq:linNC}) 
in vector-matrix notation as
\begin{equation}
 \ve{z}_i^{\mathrm{out}} = \ve{C}_i^{} \, \ve{z}_i^{\mathrm{in}} \, ,
\end{equation}
where $\ve{C}_i \in \F_q^{\outdeg{i} \times \indeg{i}}$ is the 
\emph{coefficient matrix} of node $\node_i$
\begin{eqnarray} \label{Eq:CodingMatrix}
 \ve{C}_{i} &=& 
 \begin{bmatrix}
  c_{i,1,1} & \cdots & c_{i,\indeg{i},1} \\
  c_{i,1,2} & \cdots & c_{i,\indeg{i},2} \\
  \vdots       & \ddots & \vdots \\
  c_{i,1,\outdeg{i}} & \cdots & c_{i,\indeg{i},\outdeg{i}}
 \end{bmatrix} \, ,
\end{eqnarray}
%
In NC Variant~\RM{1} the columns of $\ve{C}_i$ are restricted to one element 
($c_{i,\delta,\rho} = c_{i,\delta}$, $\forall \rho$), whereas in NC 
Variant~\RM{2} the columns consist of individual entries.

Since each intermediate node performs linear coding, the resulting receive 
vector $\ve{y}_k = \begin{bmatrix} y_{k,1}, \ldots,\, y_{k,N_k} 
\end{bmatrix}^\mathsf{T} \in \F_q^{N_k}$ is still a linear transformation of 
the source vector $\ve{x} = \begin{bmatrix} x_1, \ldots,\, x_n 
\end{bmatrix}^\mathsf{T} \in \F_q^n$, i.e., the network between source $\Snode$ 
and destination $\Dnode_k$ acts as a linear map $\F_q^n \rightarrow \F_q^{N_k}$ 
which is represented by the \emph{individual network channel matrix} $\ve{A}_k 
\in \F_q^{N_k \times n}$. The elements %
$a_{i,j}$ of this matrix represent the corresponding \emph{route gains}, i.e., 
$a_{i,j}$ is the gain of the route from the $j$th outgoing edge of the source 
node $\Snode$ to the $i$th incoming edge of destination node $\Dnode_k$. These 
route gains are sums of products of the coding coefficients 
$c_{i,\delta,\rho}$. The end-to-end model for a $\Snode \rightarrow 
\Dnode_k$ link is given by
\begin{equation} \label{Eq:MMC}
 \ve{y}_k = \ve{A}_k^{} \ve{x} \, .
\end{equation}
$\Dnode_k$ is able to reconstruct $\ve{x}$ if $\ve{A}_k$ has full column
rank $n$. We speak of a \emph{valid NC} in this case.

\vspace*{-2mm}
\section{Layering} \label{Sec:layering}
\vspace*{-2mm}
\subsection{Layered Networks}
\vspace*{-1mm}
In a layered network all intermediate nodes are arranged in $L$ layers. 
Nodes in layer $l$ only receive packets from nodes in layer $l-1$, i.e., there 
are no connections between non-adjacent layers and no connections between 
nodes within the same layer. In Fig. \ref{Fig:layeredNW} a layered network with
one source node $\Snode$ and $K$ destination nodes $\Dnode_k$, $k=1,\dots,K$, 
is depicted.
\begin{figure}
 \centering
 \psfrag{S}[cc][bl][1][0]{$\Snode$}
 \psfrag{D}[cc][bl][1][0]{$\Dnode_1$}
 \psfrag{DK}[cc][bl][1][0]{\hspace*{1mm}$\Dnode_K$}
 \psfrag{DOTS}[cc][bl][1][0]{$\cdots$}
 \psfrag{X1}[bc][bl][1][0]{$x_1$}
 \psfrag{X2}[cl][tl][1][0]{$x_2$}
 \psfrag{X3}[cr][tl][1][0]{$x_3$}
 \psfrag{Xn}[bl][bl][1][0]{$x_n$}
 \psfrag{Y1}[cc][cc][1][0]{$y_{1,1}$}
 \psfrag{Y2}[c][c][1][0]{$y_{1,N_1}$}
 \psfrag{Y3}[cc][cc][1][0]{$y_{K,1}$}
 \psfrag{YN1}[cc][cc][1][0]{$y_{K,N_K}$}
 \psfrag{A1}[cl][bl][1][0]{$\ve{A}_{2,1}$}
 \psfrag{A2}[cl][bl][1][0]{$\ve{A}_{3,2}$}
 \psfrag{AL-1}[cl][bl][1][0]{$\ve{A}_{L-1,L-2}$}
 \psfrag{AL}[cl][bl][1][0]{$\ve{A}_{L,L-1}$}
 \psfrag{L0}[cl][bl][1][0]{$l=1$}
 \psfrag{L1}[cl][bl][1][0]{$l=2$}
 \psfrag{L2}[cl][bl][1][0]{$l=3$}
 \psfrag{LL2}[cl][bl][1][0]{$l=L-2$}
 \psfrag{LL1}[cl][bl][1][0]{$l=L-1$}
 \psfrag{LL}[cl][bl][1][0]{$l=L$}
 \psfrag{d}[cl][bl][1][0]{$\cdots$}
 \includegraphics{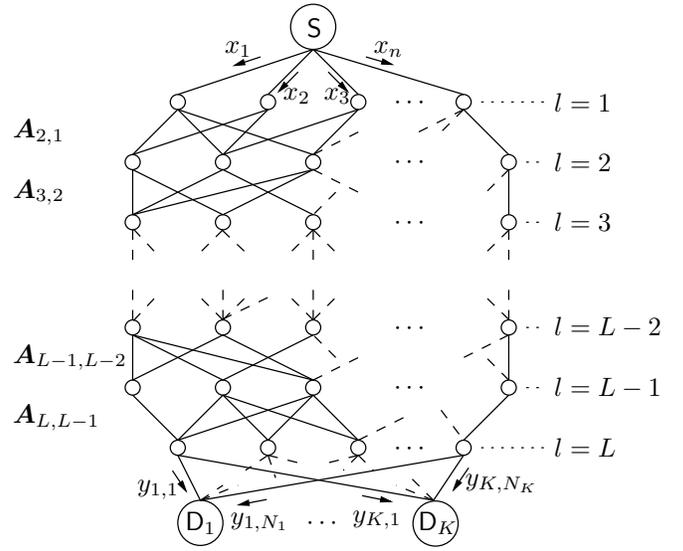}
 \caption{\label{Fig:layeredNW}Exemplary layered network with $L$ layers, one 
          source node $\Snode$, and $K$ destination nodes $\Dnode_k$ 
          (multicast scenario).}
\end{figure}
The number of nodes in layer $l$ is denoted as $n_l$, with $n_1 = n$ and 
$n_L \ge \max_k(N_k)$. For the unicast scenario, i.e., if there is only one 
destination node $\Dnode$, $n_L = N$ holds. 

Such networks exhibit a number of beneficial properties of which two are 
particularly noteworthy. 
\begin{enumerate}
 \item A layered network is inherently time synchronized. All symbols arrive 
       simultaneously at a specific intermediate node. Consequently, each 
       intermediate node can immediately code its incoming symbols and does not 
       have to wait until all required symbols arrive.
 \item It enables a factorization of the individual network channel matrices 
       $\ve{A}_k$ (cf. Sec.~\ref{Sec:LNCinLayNets}). This is the basis for the 
       derivation of the \emph{forward-backward duality} for LNC (cf. 
       Sec.~\ref{Sec:BidirectionalNC}). 
\end{enumerate}
\subsection{Linear Network Coding in Layered Networks} \label{Sec:LNCinLayNets}
When \emph{linear} NC Variant~\RM{1} is applied,%
\footnote{If the network nodes apply NC Variant~\RM{2}, the network can be 
          transformed into an equivalent network which applies NC 
          Variant~\RM{1} (cf., Theorem~\ref{Theo:var12}), and the factorization 
          of the channel matrix has to be accomplished for the equivalent 
          network.} %
the \emph{overall network channel matrix} $\ve{A}$, i.e., the linear 
transformation from layer 1 to layer $L$, can be obtained as the product of all 
$L-1$ \emph{interlayer matrices} $\ve{A}_{l+1,l} \in \F_q^{n_{l+1} \times n_l}$
\begin{equation} \label{Eq:Aprod}
 \ve{A} = \ve{A}_{L,L-1} \cdot \ve{A}_{L-1,L-2} \cdots 
          \ve{A}_{2,1} = \prod_{l=1}^{L-1} \ve{A}_{l+1,l} \, .
\end{equation}
These interlayer matrices consist of the linear factors associated with the
edges that connect the corresponding layers. The element in the $i$th row and
the $j$th column of $\ve{A}_{l+1,l}$ represents the linear factor corresponding
to the edge which connects the $j$th node in layer $l$ with the $i$th node in 
layer $l+1$. The connection between the interlayer matrices and the coefficient 
matrices is as follows. $\ve{A}_{l+1,l}$ contains the coding coefficients of 
the coefficient matrices $\ve{C}_i$ which correspond to the intermediate nodes 
in layer $l+1$. In addition to that, the interlayer matrices imply the wiring 
between the two affected layers, whereas the coefficient matrices merely 
describe the operations at one specific node. To sum up, $\ve{A}_{l+1,l}$ is an
\emph{edge-oriented} description of the LNC, which takes also the topology into
account, and $\ve{C}_i$ is a local, \emph{node-oriented} description.

The \emph{individual} network channel matrix $\ve{A}_k$ corresponding to 
destination node $\Dnode_k$, $k=1,\dots,K$, consists of a subset 
$\mathcal{D}_k$ of rows%
\footnote{We adopt the Matlab notation, i.e., 
$\ve{A}(\mathcal{A}, \mathcal{B})$, represents a matrix composed of a subset 
$\mathcal{A}$ of the rows and a subset $\mathcal{B}$ of the columns of 
$\ve{A}$.} %
of $\ve{A}$
\begin{equation}
 \ve{A}_k = \ve{A}(\mathcal{D}_k,:) \, ,
\end{equation}
where $\mathcal{D}_k$ is the subset of rows, which correspond to the nodes in 
the last layer, to which the destination node $\Dnode_k$ is connected. In case 
of the unicast scenario, the individual network channel matrix is equal to the 
overall network channel matrix $\ve{A}$. 

The factorization (\ref{Eq:Aprod}) enables a simple method to determine an 
upper bound on the mincut between the source and a destination:
\begin{myTheo} \label{Theo:MinCutLinNC}
 The mincut between the source $\Snode$ and a destination node $\Dnode_k$ in a 
 layered network is
 \begin{eqnarray}
  \mincut(\Snode,\Dnode_k) &=& 
   \max_{{c_{i,\delta,\rho} \in \F_q} \atop {q > K}} \, (\rank{\ve{A}_k})  
\nonumber\\
   &\le& \min_l \, (\, \max_{{c_{i,\delta,\rho} \in \F_q} \atop {q > K}}
     \, (\rank{\ve{A}_{l+1,l}})) \, .
 \end{eqnarray}
\end{myTheo}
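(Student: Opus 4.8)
The statement has two parts: an equality characterizing the mincut as the maximum achievable rank of the individual network channel matrix $\ve{A}_k$ over all choices of coding coefficients (with $q > K$), and an inequality bounding this by the minimum over layers of the maximal rank of an interlayer matrix $\ve{A}_{l+1,l}$. I would prove them separately.

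For the equality, the plan is to invoke the standard network coding theorems cited in the introduction. The direction $\mincut(\Snode,\Dnode_k) \ge \max \rank{\ve{A}_k}$ is immediate from the max-flow/min-cut bound of \cite{ahlswede2000network}: no linear (indeed no) coding scheme can deliver more independent symbols through a cut than its capacity, so $\rank{\ve{A}_k} \le \mincut(\Snode,\Dnode_k)$ for every choice of the $c_{i,\delta,\rho}$. The reverse direction, that the bound is achieved by a \emph{linear} code over a field with $q > K$, is exactly the linear-network-coding achievability result of \cite{li2003linear} (field size linear in the number of sinks suffices for multicast); here one would note that $\mincut(\Snode,\Dnode_k) = n$ is not assumed — the claim is about the per-destination mincut — so one applies the achievability argument to the single source–sink pair $(\Snode,\Dnode_k)$, or more carefully to the multicast target $\min_k \mincut$, and then observes that over a large enough field a generic choice of coefficients simultaneously maximizes every $\rank{\ve{A}_k}$ to its own mincut. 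I would phrase this as: since each entry of $\ve{A}_k$ is a fixed polynomial in the indeterminate coefficients, the rank is a generic (hence maximized on a Zariski-dense set) quantity, and the Schwartz–Zippel / Lemma argument of \cite{ho2006} guarantees a good evaluation exists once $q > K$.

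For the inequality, the key observation is the factorization $\ve{A}_k = \ve{A}(\mathcal{D}_k,:) = \big(\prod_{l=1}^{L-1}\ve{A}_{l+1,l}\big)(\mathcal{D}_k,:)$ from \eqref{Eq:Aprod}. Taking a submatrix of rows cannot increase rank, so $\rank{\ve{A}_k} \le \rank{\prod_{l=1}^{L-1}\ve{A}_{l+1,l}}$, and the rank of a product of matrices is at most the minimum of the ranks of the factors: $\rank{\ve{A}_k} \le \min_l \rank{\ve{A}_{l+1,l}}$. This holds \emph{pointwise} in the coefficients. Now I would fix an arbitrary layer index $l$ and take the maximum over coefficients on the left: $\max_{c,q>K}\rank{\ve{A}_k} \le \max_{c,q>K}\rank{\ve{A}_{l+1,l}}$, because any coefficient assignment feasible for the whole network in particular induces some assignment on the edges between layers $l$ and $l+1$. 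Since $l$ was arbitrary, take the minimum over $l$ on the right, and combine with the equality from the first part.

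The main obstacle is the achievability half of the equality — making precise that a \emph{single} linear code over a field with $q > K$ simultaneously attains $\rank{\ve{A}_k} = \mincut(\Snode,\Dnode_k)$ for the relevant destination(s), rather than merely attaining the common multicast rate $\min_k\mincut$. This is where one must lean on the random/algebraic LNC machinery (the product of determinant polynomials is nonzero over a sufficiently large field, \cite{ho2006}) rather than on a bare existence statement; everything else is linear algebra (rank of submatrices and of products) and the converse is just the cut bound.
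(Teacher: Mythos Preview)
Your proposal is correct and follows essentially the same approach as the paper's proof: invoke the max-flow/min-cut converse and linear-network-coding achievability (the paper cites \cite{jaggy2005} for the $q>K$ condition, you cite \cite{li2003linear,ho2006}) for the equality, and use the factorization \eqref{Eq:Aprod} together with the rank-of-a-product inequality for the upper bound. Your treatment is in fact considerably more careful than the paper's two-sentence argument, in particular your discussion of the per-destination vs.\ multicast achievability subtlety and the pointwise-then-max-then-min logic for the inequality.
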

\begin{proof}
 The mincut between $\Snode$ and $\Dnode_k$ is the number of symbols which can 
 be reliably transmitted from $\Snode$ to $\Dnode_k$, and thus, is equal to the 
 rank of the individual network channel matrix. Since the individual network 
 channel matrix is the product of the corresponding inter-layer matrices, the 
 minimal rank of the inter-layer matrices is an upper bound on the mincut 
 between $\Snode$ and $\Dnode_k$. The finite field size $q$ has to be greater 
 than the number of destinations $K$ \cite{jaggy2005}.
\end{proof}
\subsection{Layering of Arbitrary Networks} \label{Subsec:layering}
In a non-layered network paths from the source node to the destination nodes 
consist of different numbers of edges i.e., have different ``lengths''. An 
exemplary non-layered network is depicted in Fig.~\ref{Fig:nonlayeredNW}(a).
\begin{figure}
 \centering
 \psfrag{S}[cc][bl][1][0]{$\Snode$}
 \psfrag{D1}[cc][bl][1][0]{$\Dnode_1$}
 \psfrag{D2}[cc][bl][1][0]{$\Dnode_2$}
 \psfrag{N1}[cc][bl][1][0]{$\node_1$}
 \psfrag{N2}[cc][bl][1][0]{$\node_2$}
 \psfrag{N3}[cc][bl][1][0]{$\node_3$}
 \psfrag{N4}[cc][bl][1][0]{$\node_4$}
 \psfrag{N5}[cc][bl][1][0]{$\node_5$}
 \psfrag{z}[cc][bl][1][0]{$\mathrm{D}$}
 \psfrag{z2}[cc][bl][1][0]{$\mathrm{D}^2$}
 \psfrag{a}[cc][bl][1][0]{(a)}
 \psfrag{b}[cc][bl][1][0]{(b)}
 \includegraphics{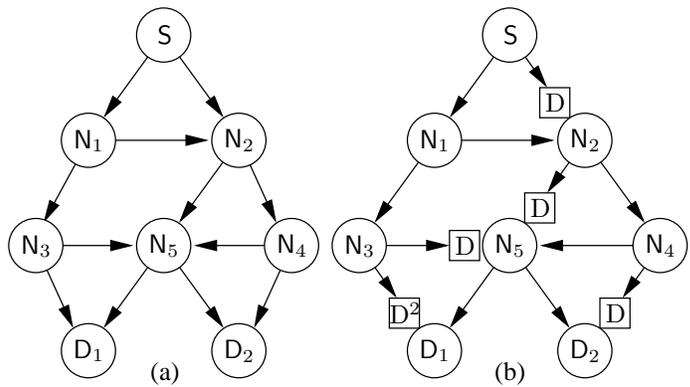}
 \caption{\label{Fig:nonlayeredNW}Non-layered network with one source, two 
          destinations, and five intermediate nodes. Without (a), and with 
          depicted delay elements (b).} 
\end{figure}
Obviously, there are paths from $\Snode$ to $\Dnode_k$ $(k=1,2)$ of different 
lengths, e.g., 
$\Snode \rightarrow \node_1 \rightarrow \node_3 \rightarrow \Dnode_1$ and 
$\Snode \rightarrow \node_1 \rightarrow \node_2  \rightarrow \node_4 
\rightarrow \node_5 \rightarrow \Dnode_1$ consisting of three and five edges, 
respectively. The aim of our proposed procedure, which we denote as 
\emph{layering}, is to force all paths from the source to all of the 
destinations to have the same length, namely $L+1$. 

For that, consider the \emph{coding points}, i.e., the nodes which receive more 
than one symbol. The first coding point in our exemplary network in 
Fig.~\ref{Fig:nonlayeredNW}(a) is $\node_2$, which receives a packet from 
$\Snode$ after one time unit, and a packet from $\node_1$ after two time units. 
To be able to code, i.e., to create a function of these two packets, $\node_2$ 
has to buffer the packet received from $\Snode$ for one time unit. This buffer, 
which actually is part of $\node_2$, can formally be redrawn outside of 
$\node_2$. We continue this step for all coding points in 
$\Graph\{\Nset,\Eset\}$ and obtain the network depicted in 
Fig.~\ref{Fig:nonlayeredNW}(b). A delay of $s$ time units is denoted as 
$\mathrm{D}^s$. Finally, we interpret these delay elements as 
single-input/single-output (SISO) nodes, which just pass the packet received on 
their incoming edge to their outgoing edge. Delays of $s$ time units are 
interpreted as $s$ consecutive SISO nodes. Basically, layering consists of two 
steps:
\begin{enumerate}
 \item Enumerate all intermediate network nodes according to an ancestral 
       ordering%
       \footnote{Such an ordering exists for all acyclic networks 
                 \cite{KoM03}.}%
       , i.e., if $e_{i,j} \in \Eset$ then $i<j$. 
 \item Visit all coding points sequentially and introduce SISO nodes, such that 
       all paths which meet in one point have the same length.
\end{enumerate}
\begin{figure}
 \centering
 \psfrag{S}[cc][bl][1][0]{$\Snode$}
 \psfrag{D1}[cc][bl][1][0]{$\Dnode_1$}
 \psfrag{D2}[cc][bl][1][0]{$\Dnode_2$}
 \psfrag{N1}[cc][bl][1][0]{$\node_1$}
 \psfrag{N2}[cc][bl][1][0]{$\node_2$}
 \psfrag{N3}[cc][bl][1][0]{$\node_3$}
 \psfrag{N4}[cc][bl][1][0]{$\node_4$}
 \psfrag{N5}[cc][bl][1][0]{$\node_5$}
 \psfrag{l1}[cc][bl][1][0]{$l=1$}
 \psfrag{l2}[cc][bl][1][0]{$l=2$}
 \psfrag{l3}[cc][bl][1][0]{$l=3$}
 \psfrag{l4}[cc][bl][1][0]{$l=4$}
 \psfrag{A1}[cc][cc][1][0]{$\ve{A}_{2,1} \in \F_q^{2 \times 2}$}
 \psfrag{A2}[cc][cc][1][0]{$\ve{A}_{3,2} \in \F_q^{4 \times 2}$}
 \psfrag{A3}[cc][cc][1][0]{$\ve{A}_{4,3} \in \F_q^{3 \times 4}$}
 \includegraphics{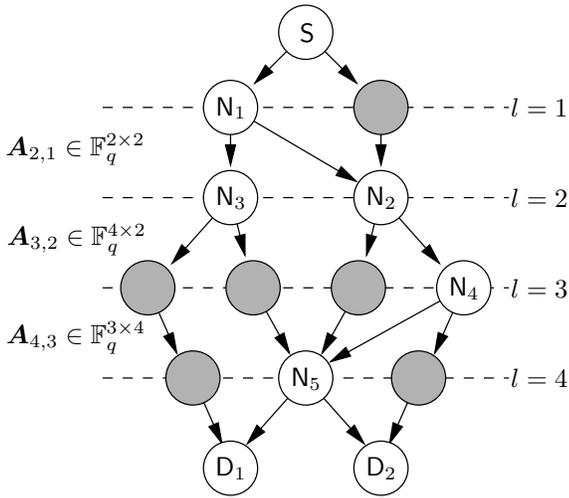}
 \caption{\label{Fig:layering}Communication network from 
          Fig.~\ref{Fig:nonlayeredNW} in layered representation.}
\end{figure}
After redrawing the network, we obtain the layered structure depicted in 
Fig.~\ref{Fig:layering}, where the introduced SISO nodes are depicted in gray. 
This layered network with $L=4$ layers is equivalent to the network depicted in 
Fig.~\ref{Fig:nonlayeredNW}(a). Since each coding point has to be visited 
exactly once, the complexity of this algorithm is of order 
$\mathcal{O}(N_\mathrm{cp}^{} \cdot \bar{d}_\mathrm{cp}^{\mathrm{in}})$, where 
$N_\mathrm{cp}$ is the number of coding points and 
$\bar{d}_\mathrm{cp}^{\mathrm{in}}$ is the average number of incoming edges of 
the coding points. We summarize this insight in the following theorem.
\begin{myTheo} \label{Theo:layering}
 Despite the actual structure of an acyclic network, an equivalent layered 
 network can be obtained by introducing additional redundant SISO nodes, such 
 that all paths from the source to any destination consist of the same number 
 of edges.
\end{myTheo}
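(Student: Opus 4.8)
The plan is to recast the informal two-step recipe that precedes the statement as a clean layer-labelling argument driven by the longest-path depth function. First I would discard every node of $\Graph$ that does not lie on a path from $\Snode$ to some $\Dnode_k$, since such nodes are irrelevant to the multicast and removing them changes no $\ve{A}_k$. Because the remaining $\Graph=(\Nset,\Eset)$ is finite and acyclic, for each node $v$ the depth $\ell(v)$, defined as the maximum number of edges on a path from $\Snode$ to $v$, is well defined, with $\ell(\Snode)=0$, and for every edge $e=(\node_i,\node_j)\in\Eset$ one has $\ell(\node_j)\ge\ell(\node_i)+1$, because a longest path to $\node_i$ extends along $e$. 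Set $M:=\max_k\ell(\Dnode_k)$ and then overwrite $\ell(\Dnode_k):=M$ for every $k$; since no edge leaves a destination, this only relaxes the inequality $\ell(\head(e))\ge\ell(\tail(e))+1$ on every edge $e$. Moreover, since every surviving node lies on a source-to-destination path, every non-destination node $v$ satisfies $\ell(v)\le M-1$. Put $L:=M-1$.

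I would then construct the layered network $\Graph'$ by edge subdivision: each edge $e=(\node_i,\node_j)$ with $d:=\ell(\node_j)-\ell(\node_i)\ge1$ is replaced by a directed path $\node_i\to s_1\to\cdots\to s_{d-1}\to\node_j$ through $d-1$ fresh SISO relay nodes (nothing is inserted when $d=1$), and I set $\ell(s_t):=\ell(\node_i)+t$. Every edge of $\Graph'$ then runs from a node of depth $p$ to a node of depth $p+1$: the internal edges of a subdivided path raise the depth by one, the first edge goes from $\ell(\node_i)$ to $\ell(\node_i)+1$, and the last from $\ell(\node_i)+d-1=\ell(\node_j)-1$ to $\ell(\node_j)$. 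Declaring layer $p$ to be the set of nodes of depth $p$ — the source at depth $0$, the intermediate nodes in layers $1,\dots,L$, all destinations at depth $L+1$ — hence yields a network whose edges join only adjacent layers and never two nodes of the same layer, i.e., a layered network in the sense of Sec.~\ref{Sec:layering}.

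To see that all source-to-destination paths in $\Graph'$ now have the same length, I would argue by telescoping: contracting the inserted relay chains maps a path from $\Snode$ to $\Dnode_k$ in $\Graph'$ to a path $\Snode=v_0\to v_1\to\cdots\to v_m=\Dnode_k$ in $\Graph$, and the number of edges it uses in $\Graph'$ equals $\sum_{t=1}^{m}\bigl(\ell(v_t)-\ell(v_{t-1})\bigr)=\ell(\Dnode_k)-\ell(\Snode)=M=L+1$, the same value for every path and every $k$. Equivalence of $\Graph'$ and $\Graph$ is then immediate from the fact that a SISO relay merely forwards its single symbol: assigning coefficient $1$ to every relay turns any linear network code on $\Graph$ into one on $\Graph'$ with identical individual network channel matrices $\ve{A}_k$, and conversely the scalar of each relay chain can be absorbed into the coefficient matrix $\ve{C}_j$ of the node $\node_j$ it feeds, recovering a code on $\Graph$ with the same $\ve{A}_k$. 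Hence the achievable maps $\F_q^n\to\F_q^{N_k}$, and therefore — via Theorem~\ref{Theo:MinCutLinNC} — the mincuts and the set of valid network codes, coincide.

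The construction is routine; the only step that needs care is the depth function itself. I expect the one genuine check to be that the labels stay consistent after all destinations are forced to depth $M$ — no edge may end up pointing backwards or within a layer, and no intermediate node may be pushed beyond layer $L$ — which is exactly what the inequality $\ell(\node_j)\ge\ell(\node_i)+1$, the sink property of the destinations, and the initial pruning of irrelevant nodes together guarantee.
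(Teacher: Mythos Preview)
Your argument is correct. The paper does not actually supply a proof of Theorem~\ref{Theo:layering}; it states the theorem as a summary of the preceding worked example and the two-step recipe (ancestral ordering, then sequential insertion of delay elements at the coding points). Your route is genuinely different: instead of processing coding points one at a time and equalising incoming path lengths locally, you assign every node a global layer label via the longest-path depth $\ell(\cdot)$, force all destinations to the common depth $M$, and then subdivide each edge by its depth gap. This buys you a clean telescoping proof that every $\Snode$--$\Dnode_k$ path in $\Graph'$ has length exactly $M=L+1$, and it makes the equivalence of codes on $\Graph$ and $\Graph'$ explicit (unit coefficients on relays in one direction, absorption of relay scalars into $\ve{C}_j$ in the other), neither of which the paper spells out. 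The paper's recipe, by contrast, is closer to what one would implement---it literally describes the buffers an intermediate node must hold---but its correctness for a general acyclic network is only asserted, not shown. The two constructions produce the same layered graph: at a single-input node the depth gap on the unique incoming edge is always~$1$, so your subdivision inserts SISO nodes precisely on edges into coding points (and into destinations), matching the delay elements of Fig.~\ref{Fig:nonlayeredNW}(b).
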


The factorization (\ref{Eq:Aprod}) of $\ve{A}$ can be accomplished together 
with the layering procedure: During the layering procedure the nodes are 
assigned to layers and the wiring between the layers can be obtained from the 
set of edges $\Eset$.

We speak of a \emph{layered Variant~\RM{1} representation} of an arbitrary 
network if it was layered according to Theorem~\ref{Theo:layering} and 
transformed to Variant~\RM{1} according to Theorem~\ref{Theo:var12}. In 
\cite{schotsch2015pout} we already exploited the layered Variant~\RM{1}
representation of communication networks in the context of RLNC. With the aid 
of the factorized version of the network channel matrix (\ref{Eq:Aprod}) we 
derived in \cite{schotsch2015pout} the probability distribution of the entries 
of $\ve{A}$ and an upper bound on the outage probability of random linear 
network codes with known incidence matrices. A further consequence of the 
layered Variant~\RM{1} representation is a new possibility of the determination 
of an upper bound on the mincut of acyclic networks in two steps:
\begin{enumerate}
 \item Layering of the network and a Variant~\RM{2} to Variant~\RM{1} 
       conversion if necessary.
 \item Determination of the mincut according to Theorem \ref{Theo:MinCutLinNC}.
\end{enumerate}

\vspace*{-2mm}
\section{Bidirectional Network Coding} \label{Sec:BidirectionalNC}
\noindent Up to now, we have considered a unidirectional communication from the 
source node $\Snode$ to one or several destination nodes $\Dnode_k$. In this 
section, we address the problem of a bidirectional communication between a 
source-destination pair, i.e., the case where a destination node $\Dnode_k$ 
replies to the source node $\Snode$, which is of interest, e.g., in optical 
(fiber-optical) networks. For the moment, we assume that $N_k = n$, i.e., that 
the individual network channel matrix $\ve{A}_k$ is square. Furthermore, for 
notational convenience, we drop the index $k$ and denote the considered 
individual network channel matrix as $\ve{A}$.

When we reverse the direction of communication, it is reasonable to reverse the 
operations at the intermediate nodes, as depicted in Fig.~\ref{Fig:revert} for
the case of a node with two incoming and two outgoing edges. In the backward
direction, not only the direction of communication is reversed, also the 
summing and the distribution points are interchanged.
\begin{figure}
 \centering
 \psfrag{a1}[cc][bl][1][0]{$c_{1,1}$}
 \psfrag{a2}[cc][bl][1][0]{$c_{2,1}$}
 \psfrag{a3}[cc][bl][1][0]{$c_{1,2}$}
 \psfrag{a4}[cc][bl][1][0]{$c_{2,2}$}
 \psfrag{zin1}[c][c][1][0]{$z_1^\mathrm{in}$}
 \psfrag{zin2}[c][c][1][0]{$z_2^\mathrm{in}$}
 \psfrag{zout1}[c][c][1][0]{$z_1^{\mathrm{out}}$}
 \psfrag{zout2}[c][c][1][0]{$z_2^{\mathrm{out}}$}
 \psfrag{a}[c][c][1][0]{(a)}
 \psfrag{b}[c][c][1][0]{(b)}
 \psfrag{zin1b}[c][c][1][0]{$z_{\mathrm{b},1}^\mathrm{in}$}
 \psfrag{zin2b}[c][c][1][0]{$z_{\mathrm{b},2}^\mathrm{in}$}
 \psfrag{zout1b}[c][c][1][0]{$z_{\mathrm{b},1}^{\mathrm{out}}$}
 \psfrag{zout2b}[c][c][1][0]{$z_{\mathrm{b},2}^{\mathrm{out}}$}
 \includegraphics[width=.8\columnwidth]{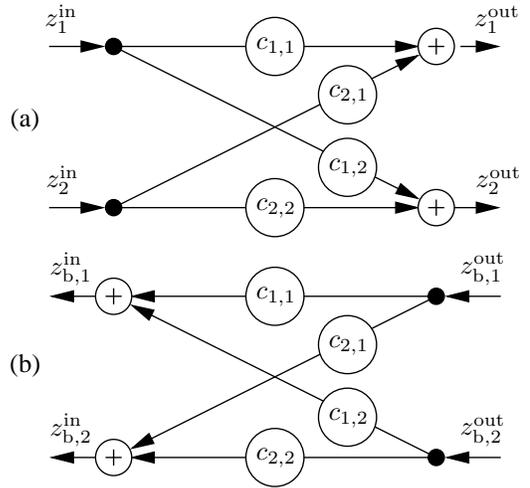}
 \caption{Exemplary intermediate node with two incoming and two outgoing edges 
      in forward (a) and backward (b) direction.}
 \label{Fig:revert}
\end{figure}
The input-output relation of this exemplary node by means of the coefficient 
matrices (\ref{Eq:CodingMatrix}) for the \emph{forward direction} is
\begin{equation}
 \begin{bmatrix}
  z_1^\mathrm{out} \\ z_2^\mathrm{out}
 \end{bmatrix} = 
 \underbrace{%
 \begin{bmatrix}
  c_{1,1} & c_{2,1} \\
  c_{1,2} & c_{2,2}
 \end{bmatrix}
 }_{\ve{C}_i}
 \begin{bmatrix}
  z_1^{\mathrm{in}} \\ z_2^{\mathrm{in}}
 \end{bmatrix} \, ,
\end{equation}
whereas for the \emph{backward direction} we obtain
\begin{equation}
 \begin{bmatrix}
  z_{\mathrm{b},1}^\mathrm{out} \\ z_{\mathrm{b},2}^\mathrm{out}
 \end{bmatrix} = 
 \underbrace{%
 \begin{bmatrix}
  c_{1,1} & c_{1,2} \\
  c_{2,1} & c_{2,2}
 \end{bmatrix}
 }_{\ve{C}_{\mathrm{b},i}}
 \begin{bmatrix}
  z_{\mathrm{b},1}^{\mathrm{in}} \\ z_{\mathrm{b},2}^{\mathrm{in}}
 \end{bmatrix} \, ,
\end{equation}
i.e., if we retain the coding coefficients and reverse the operations at an 
intermediate node $\node_i$, the coefficient matrix $\ve{C}_{\mathrm{b},i}$ for 
the backward direction is the transpose of the coefficient matrix 
$\ve{C}_i$ for the forward direction
\begin{equation}
 \ve{C}_{\mathrm{b},i} = \ve{C}_i^\T \, .
\end{equation}
The consequence for the individual network channel matrix is stated in the 
following theorem.
\begin{myTheo} \label{Theo:FBDuality}
 The individual network channel matrix $\ve{A}_\mathrm{b}$ for the backward 
 direction in networks which apply LNC is equal to the transpose of the 
 network channel matrix $\ve{A}$ for the forward direction
 \begin{equation}
  \ve{A}_\mathrm{b} = \ve{A}^\T \, ,
 \end{equation}
 given that the coding coefficients are retained, and the operations at the 
 intermediate nodes are reversed.
\end{myTheo}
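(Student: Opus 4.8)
The plan is to exploit the layered representation of the network together with the factorization~\eqref{Eq:Aprod} and reduce the whole statement to the elementary matrix identity $(\ve{M}_1 \ve{M}_2 \cdots \ve{M}_{L-1})^\T = \ve{M}_{L-1}^\T \cdots \ve{M}_2^\T \ve{M}_1^\T$. First I would invoke Theorem~\ref{Theo:layering} and Theorem~\ref{Theo:var12}: without loss of generality the network may be assumed to be in layered Variant~\RM{1} representation, so that $\ve{A} = \prod_{l=1}^{L-1} \ve{A}_{l+1,l}$ with each interlayer matrix $\ve{A}_{l+1,l}$ built from the coding coefficients $c_{i,\delta,\rho}$ of the nodes in layer $l+1$ together with the wiring between layers $l$ and $l+1$. (Since $\ve{A}$ is assumed square here, each $\Dnode_k$ is connected to all nodes in the last layer and $\ve{A}_k = \ve{A}$, so I may drop the subscript as the excerpt does.)

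Next I would set up the backward network carefully. Reversing the direction of communication turns every edge $e_{i,j}$ into an edge from $\node_j$ to $\node_i$, so layer $l$ in the forward network becomes layer $L-l+1$ in the backward network, and the role of summing points and distribution points is swapped at each node, as illustrated in Fig.~\ref{Fig:revert}. The key local fact, already established in the discussion preceding the theorem, is that retaining the coding coefficients while reversing the operations replaces each node's coefficient matrix $\ve{C}_i$ by $\ve{C}_i^\T$. I would then argue that, at the interlayer level, the backward interlayer matrix connecting backward-layer $l$ to backward-layer $l+1$ is exactly $\ve{A}_{l+1,l}^\T$: the entry in row $j$, column $i$ of the backward matrix is the linear factor on the reversed edge from the $i$th node (of the forward layer $l+1$) to the $j$th node (of forward layer $l$), which is the same coefficient $c_{\cdot,\cdot,\cdot}$ sitting in row $i$, column $j$ of $\ve{A}_{l+1,l}$; the swap of summation and distribution points is precisely what implements the transpose rather than leaving the matrix unchanged.

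With that identification, the backward overall channel matrix is the product of the backward interlayer matrices taken in backward layer order, i.e.
\begin{equation}
 \ve{A}_\mathrm{b} = \prod_{l=L-1}^{1} \ve{A}_{l+1,l}^\T
  = \ve{A}_{2,1}^\T \, \ve{A}_{3,2}^\T \cdots \ve{A}_{L,L-1}^\T
  = \bigl(\ve{A}_{L,L-1} \cdots \ve{A}_{3,2}\, \ve{A}_{2,1}\bigr)^\T
  = \ve{A}^\T \, ,
\end{equation}
where the third equality is the reversal-of-order property of the transpose of a product and the last uses~\eqref{Eq:Aprod}. Finally I would note that the SISO nodes introduced by the layering procedure have $1\times 1$ identity-type coefficient matrices, which are their own transpose, so they contribute no obstruction; and that since linearity is preserved edge-by-edge, the end-to-end map of the backward network is indeed the linear map $\ve{A}_\mathrm{b}$. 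The main obstacle I anticipate is not the algebra but pinning down precisely why reversing "summation" and "distribution" points yields $\ve{C}_i^\T$ (equivalently $\ve{A}_{l+1,l}^\T$) rather than $\ve{C}_i$ itself — this is the conceptual heart and must be stated cleanly, perhaps by tracking a single route's gain as a product of coefficients and observing that the reversed route traverses the same coefficients but with the indices $\delta$ and $\rho$ interchanged at every node.
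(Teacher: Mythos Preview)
Your proposal is correct and follows essentially the same route as the paper: reduce to a layered Variant~\RM{1} representation, show that each inter-layer matrix is transposed under reversal (the paper does this by way of the concrete two-layer example in Fig.~\ref{Fig:2layFB}, whereas you argue entrywise), and then apply the reversal-of-order transpose identity to the factorization~\eqref{Eq:Aprod}. Your additional remarks about SISO nodes and about the $\delta\leftrightarrow\rho$ swap being the conceptual content of $\ve{C}_i\mapsto\ve{C}_i^\T$ are sound and go slightly beyond what the paper spells out, but the overall strategy is the same.
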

\begin{proof}
 Consider a layered Variant~\RM{1} representation of an arbitrary network which 
 applies LNC. We first investigate the effects of the reversion of the 
 communication direction on the inter-layer matrices. For that, consider the 
 two adjacent layers depicted in Fig.~\ref{Fig:2layFB}(a).
 \begin{figure}
  \centering
  \psfrag{n11}[cc][bl][1][0]{$\node_1$}
  \psfrag{n12}[cc][bl][1][0]{$\node_2$}
  \psfrag{n13}[cc][bl][1][0]{$\node_3$}
  \psfrag{n21}[cc][bl][1][0]{$\node_A$}
  \psfrag{n22}[cc][bl][1][0]{$\node_B$}
  \psfrag{n23}[cc][bl][1][0]{$\node_C$}
  \psfrag{n24}[cc][bl][1][0]{$\node_D$}
  \psfrag{a1}[c][c][.9][0]{$\color{red}c_{A,1}$}
  \psfrag{a2}[c][c][.9][0]{$\color{red}c_{A,2}$}
  \psfrag{a3}[c][c][.9][0]{$\color{red}c_{B,1}$}
  \psfrag{a4}[c][c][.9][0]{$\color{red}c_{B,2}$}
  \psfrag{a5}[c][c][.9][0]{$\color{red}c_{B,3}$}
  \psfrag{a6}[c][c][.9][0]{$\color{red}c_{C,2}$}
  \psfrag{a7}[c][c][.9][0]{$\color{red}c_{C,3}$}
  \psfrag{a8}[c][c][.9][0]{$\color{red}c_{D,3}$}
  \psfrag{l}[c][c][1][0]{$l$}
  \psfrag{l1}[c][c][1][0]{$l+1$}
  \psfrag{A}[c][c][1][0]{$\ve{A}_{l+1,l}$}
  \psfrag{Ab}[c][c][1][0]{$\ve{A}_{l,l+1}$}
  \psfrag{a}[c][c][1][0]{(a)}
  \psfrag{b}[c][c][1][0]{(b)}
  \includegraphics[width=\columnwidth]{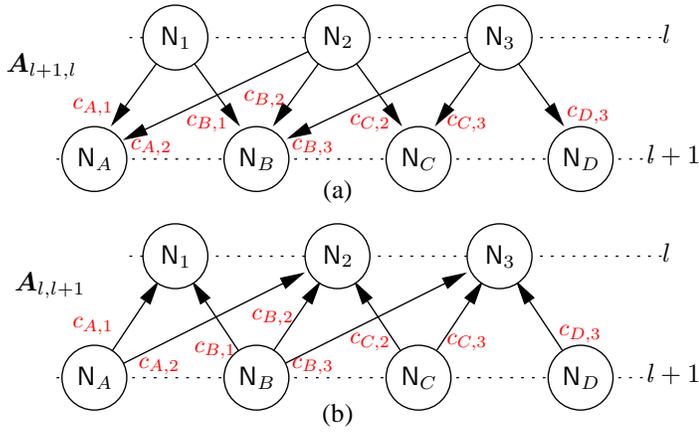}
  \caption{Two exemplary layers of a communication network in forward (a) and 
           in backward (b) direction.}
  \label{Fig:2layFB}           
 \end{figure}
 The inter-layer matrix for the forward direction $\ve{A}_{l+1,l}$ results in
 \begin{equation}
  \ve{A}_{l+1,l} = 
  \begin{bmatrix}
   c_{A,1} & c_{A,2} & 0 \\
   c_{B,1} & c_{B,2} & c_{B,3} \\
   0 & c_{C,2} & c_{C,3} \\
   0 & 0 & c_{D,3}
  \end{bmatrix} \, .
 \end{equation}
 If we reverse the processing at the nodes as described above and retain the 
 coding coefficients, the coefficient which corresponded to edge 
 $e_{i,j}$, now corresponds to the reversed edge $e_{j,i}$, cf. 
 Fig.~\ref{Fig:2layFB}(b). Due to the fact that the roles of the layers are 
 interchanged (i.e., the ``transmitting'' layer is now the ``receiving'' layer
 and vice versa) the inter-layer matrix for the backward direction 
 $\ve{A}_{l,l+1}$ is the transposed version of the one for the forward 
 direction
 \begin{equation}
  \ve{A}_{l,l+1} = 
  \begin{bmatrix}
   c_{A,1} & c_{B,1} & 0 & 0 \\
   c_{A,2} & c_{B,2} & c_{C,2} & 0 \\
   0 & c_{B,3} & c_{C,3} & c_{D,3}
  \end{bmatrix} = 
  \ve{A}_{l+1,l}^\T \, .
 \end{equation}
 Inserting this into (\ref{Eq:Aprod}) yields
 \begin{eqnarray}
  \ve{A}_\mathrm{b} &=& \ve{A}_{1,2} \cdot \ve{A}_{2,3} \cdots \ve{A}_{L-1,L}
  \nonumber\\
  &=& \ve{A}_{2,1}^\T \cdot \ve{A}_{3,2}^\T \cdots \ve{A}_{L,L-1}^\T
  \nonumber\\
  &=& (\ve{A}_{L,L-1} \cdots \ve{A}_{3,2} \cdot \ve{A}_{2,1})^\T
  \nonumber\\
  &=& \ve{A}^\T \, .
 \end{eqnarray}
\end{proof}
\vspace*{-3mm}
Theorem~\ref{Theo:FBDuality} can be seen as an analogon to the famous 
\emph{uplink-downlink duality} from MIMO communications, e.g., 
\cite{UDDuality}, which states that the channel matrix $\ve{H}_\mathrm{u}$ for 
the uplink is equal to the Hermitian transpose of the channel matrix 
$\ve{H}_\mathrm{d}$ for the downlink, i.e., 
$\ve{H}_\mathrm{u}^{} = \ve{H}_\mathrm{d}^\H$ (in the complex baseband).

If $N_k > n$, the ``reverse source'' node $\Dnode_k$ has more outgoing edges 
than the ``reverse destination'' node $\Snode$ incoming ones. As a consequence, 
the ``reverse source'' $\Dnode_k$ cannot simply transmit $N_k$ individual 
transmit symbols. Rather, we have to force $\ve{A}$ to be square, by selecting 
$n$ linearly independent rows and deleting the remaining $N_k - n$ ones. In the 
graph $\Graph$ this corresponds to deleting the corresponding $N_k - n$ 
incoming edges of $\Dnode_k$. Another possibility to resolve the problem of 
having too many outgoing edges at the ``reverse source'' is the application of 
\emph{precoding} \cite{Fis02}, which is denoted as \emph{coding at the source} 
\cite{FrS07} in the context of NC. Then, the ``reverse source'' transmits $n$ 
individual transmit symbols and $N_k - n$ linear combinations of them.

The consequence of Theorem~\ref{Theo:FBDuality} on the validity of the linear
network code is as follows.
\vspace*{-2mm}
\begin{myTheo}
 If a linear network code for the forward direction is valid, then it is also 
 valid for the backward direction.
\end{myTheo}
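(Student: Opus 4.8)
The plan is to leverage Theorem~\ref{Theo:FBDuality} directly, together with the characterization of validity given right after equation~(\ref{Eq:MMC}): a linear network code for the $\Snode\rightarrow\Dnode_k$ link is valid precisely when the individual network channel matrix $\ve{A}$ has full column rank $n$. Recall that in this bidirectional setting we have assumed $N_k=n$, so $\ve{A}\in\F_q^{n\times n}$ is square; hence ``full column rank'' is equivalent to ``$\ve{A}$ is invertible,'' i.e.\ $\rank{\ve{A}}=n$. The backward channel matrix is $\ve{A}_\mathrm{b}=\ve{A}^\T$ by Theorem~\ref{Theo:FBDuality}, so the backward code is valid iff $\ve{A}^\T$ has full column rank $n$.

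The key step is then the elementary linear-algebra fact that transposition preserves rank: $\rank{\ve{A}^\T}=\rank{\ve{A}}$ over any field. So if the forward code is valid, $\rank{\ve{A}}=n$, hence $\rank{\ve{A}_\mathrm{b}}=\rank{\ve{A}^\T}=n$, and $\ve{A}_\mathrm{b}$ likewise has full column rank $n$, which is exactly the validity condition for the backward direction. I would state this as the one-line core of the proof, after invoking Theorem~\ref{Theo:FBDuality}.

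The only real subtlety — and the step I expect to need a sentence of care — is the case $N_k>n$, which the text discusses just before the theorem: there the ``reverse source'' $\Dnode_k$ has more outgoing edges than $\Snode$ has incoming ones. Here the resolution is to note that validity of the forward code means some $n$ rows of the (now $N_k\times n$) matrix $\ve{A}$ are linearly independent; restricting $\ve{A}$ to those $n$ rows gives a square invertible submatrix $\tilde{\ve{A}}$, and by Theorem~\ref{Theo:FBDuality} (applied after deleting the corresponding $N_k-n$ incoming edges of $\Dnode_k$, i.e.\ the corresponding rows) the backward matrix is $\tilde{\ve{A}}^\T$, which is invertible by the same rank-of-transpose argument; alternatively, with precoding at the reverse source the backward map factors through $\tilde{\ve{A}}^\T$ and is again full rank. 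I would fold this into a short closing remark so the theorem covers both $N_k=n$ and $N_k>n$, but the heart of the argument remains the single identity $\rank{\ve{A}^\T}=\rank{\ve{A}}$ combined with the duality theorem.
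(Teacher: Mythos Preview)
Your proposal is correct and follows essentially the same approach as the paper: invoke Theorem~\ref{Theo:FBDuality} to obtain $\ve{A}_\mathrm{b}=\ve{A}^\T$ and then use $\rank{\ve{A}^\T}=\rank{\ve{A}}=n$ to conclude validity. The paper's proof is just this one-line rank identity; your additional discussion of the $N_k>n$ case is handled in the paper as a remark preceding the theorem rather than inside the proof, but your treatment of it is consistent with theirs.
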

\vspace*{-3mm}
\begin{proof}
 A linear network code is valid, if the network channel matrix has a rank 
 equal to $n$. If this is given, then the rank of the network channel matrix
 for the backward direction is also equal to $n$
 \begin{equation}
  \rank{\ve{A}_\mathrm{b}} = \rank{\ve{A}^\T} = \rank{\ve{A}} = n \, .
 \end{equation}
\end{proof}
\vspace*{-3mm}
Thus, in a bidirectional NC scenario it is sufficient to design a linear 
network code for one direction, e.g., with the aid of the \emph{linear 
information flow (LIF) algorithm} \cite{jaggy2005}. This code can then be used 
also for the backward direction if the operations at the intermediate nodes 
are reversed according to Fig.~\ref{Fig:revert}.

\vspace*{-2mm}
\section{Conclusion} \label{Sec:Conclusion}
\noindent In this work, we have classified NC variants, and have shown that all 
variants can be traced back to the most basic one---NC Variant~\RM{1}. We have 
studied layered networks, and the application of LNC to such networks. 
Moreover, a technique called layering has been proposed, which allows us to 
introduce a layered structure into arbitrary, non-layered networks. With the 
aid of the layered Variant~\RM{1} representation of communication networks we 
were able to state an algebraic expression of the mincut, and to derive the 
forward-backward duality for LNC, which can be seen as an analogon to the 
famous uplink-downlink duality for MIMO channels \cite{UDDuality}. Furthermore, 
we already exploited the advantages of the layered Variant~\RM{1} 
representation of a communication network in the context of random linear 
network coding in \cite{schotsch2015pout}.

\vspace*{-2mm}


\end{document}